\journal{Journal of \LaTeX\ Templates}
\begin{document}

\begin{frontmatter}

\title{The Theory of an Arbitrary Higher $\lambda$-Model \tnoteref{mytitlenote}}
%\tnotetext[mytitlenote]{Also available at %\href{https://arxiv.org/abs/2007.15082}{arxiv}.}

%% Group authors per affiliation:
%\author{Elsevier\fnref{myfootnote}}
%\address{Radarweg 29, Amsterdam}
%\fntext[myfootnote]{Since 1880.}

%% or include affiliations in footnotes:
\author{Daniel O. Martínez-Rivillas}%\cortext[mycorrespondingauthor]{Corresponding author}
%\ead{domr@cin.ufpe.br}
%\ead[url]{www.elsevier.com}

\author{Ruy J.G.B. de Queiroz}

%\address[mymainaddress]{CIn, Universidade Federal de Pernambuco, Recife-PE, Brasil}
%\address[mysecondaryaddress]{360 Park Avenue South, New York}

\begin{abstract}
One takes advantage of some basic properties of every homotopic $\lambda$-model (e.g.\ extensional Kan complex) to explore the higher $\beta\eta$-conversions, which would correspond to proofs of equality between terms of a theory of equality of any extensional Kan complex. Besides, Identity types based on computational paths are adapted to a type-free theory with higher $\lambda$-terms, whose equality rules would be contained in the theory of any $\lambda$-homotopic model.
\end{abstract}

\begin{keyword}
Higher lambda calculus\sep Homotopic lambda model\sep Kan complex reflexive\sep Higher conversion \sep Homotopy type-free theory
\MSC[2020] 03B70
\end{keyword}

\end{frontmatter}

%\linenumbers

\newtheorem{defin}{Definition}[section]
\newtheorem{teor}{Theorem}[section]
\newtheorem{corol}{Corollary}[section]
\newtheorem{prop}{Proposition}[section]
\newtheorem{rem}{Remark}[section]
\newtheorem{lem}{Lemma}[section]
\newtheorem{nota}{Notation}[section]
\newtheorem{ejem}{Example}[section]

\section{Introduction}

In \cite{Martinez} and \cite{MartinezHoDT} the initiative is born to search for higher $\lambda$-models with non-trivial structure of $\infty$-groupoid, by using extensional Kan  complexes $K\simeq[K\rightarrow K]$. In \cite{Martinez21} the existence of higher non-trivial models is proved by solving homotopy domain equations.

\medskip If we understand an arbitrary higher $\lambda$-model as an extensional  Kan complex, the following question arises: What would be the syntactic structure of the equality theory of any higher $\lambda$-model, i.e., is its equality theory a generalization of the $\beta\eta$-conversions to $(n)\beta\eta$-conversions in a set $\Lambda_{n-1}(a,b)$ by  $(n)\beta\eta$-contractions induced by the extensionality from a Kan complex?.

\medskip We shall see some consequences of the equality theory $Th(\mathcal{K})$ of an extensional Kan complex $\mathcal{K}$ with some examples of equality and nonequality of terms. This paves the way for a definition of the $(n)\beta\eta$-conversions, which will belong to the set of $n$-conversions $\Lambda_n$ induced by the least theory of equality on all the extensional Kan complexes, here called \emph{Homotopy Type-Free Theory} (\emph{HoTFT}).

\medskip On the other hand, we define, from the identity types based on computational paths \cite{Queiroz}, the untyped theory of higher $\lambda\beta\eta$-equality TH-$\lambda\beta\eta$. We ask about the relationship between  TH-$\lambda\beta\eta$ and HoTFT.

\medskip In this work we will try to answer these questions according to the following sections:
In section 2, we explore the theory of any extensional Kan complex in order to generalize the $\beta\eta$-conversions to $(n)\beta\eta$-conversions in a set $\Lambda_{n-1}(a,b)$ by  $(n)\beta\eta$-contractions induced by the extensionality from a Kan complex. In section 3, the identity types $Id_A(a,b)$ based on computational paths are taken into account, to define a type-free theory of higher $\lambda\beta\eta$-equality TH-$\lambda\beta\eta$ with  $\lambda^n$-terms  and $n$-redexes in a set $\Lambda^{n-1}(a,b)$ with $n\geq 1$. Finally, we look at the relationship of this TH-$\lambda\beta\eta$  with the least theory of equality on all the extensional Kan complexes HoTFT through the relationship between the sets $\Lambda^n$ and $\Lambda_n$ for each $n\geq 0$. 

\section{Theory of extensional Kan complexes}

In this section, we shall see some consequences of the equality theory $Th(\mathcal{K})$ of an extensional Kan complex $\mathcal{K}$ with some examples of equality and nonequality of terms. This shall pave the way for a definition of the $(n)\beta\eta$-conversions, which will belong to the set of $n$-conversions $\Lambda_n$ induced by the least theory of equality on all extensional Kan complexes, denoted by HoTFT.

\begin{defin}[$\infty$-category \cite{DBLP:books/mk/Lurie}]
	An $\infty$-category is a simplicial set $X$ which has the following property: for any $0<i<n$, any map $f_0:\Lambda_i^n\rightarrow X$ admits an
	extension $f:\Delta^n\rightarrow X$.	
\end{defin}
\noindent Here the simplicial set $K$ is defined as a presheaf $\Delta^{op}\rightarrow Set$, with $\Delta$ being the \textit{simplicial indexing category}, whose objects are finite ordinals $[n]=\{0,1,\ldots,n\}$, and morphisms are the (non strictly) order preserving maps. $\Delta^n$ is the \textit{standard $n$-simplex} defined for each $n\geq 0$ as the simplicial set $\Delta^n:=\Delta(-,[n])$. And $\Lambda_i^n$ is a \textit{horn} defined as largest subobject of $\Delta^n$ that
does not include the face opposing the $i$-th vertex.

\begin{defin}
	From the definition above, we have the following special cases:
	\begin{itemize}
		\item $X$ is a Kan complex if there is an extension for each $0\leq i\leq n$.
		\item $X$ is a category if the extension exists uniquely \cite{DBLP:books/mk/Rezk22}.
		\item $X$ is a groupoid if the extension exists for all $0\leq i\leq n$ and is unique \cite{DBLP:books/mk/Rezk22}. 
	\end{itemize}
\end{defin}

In other words,  a Kan complex is an $\infty$-groupoid; composed of objects, $1$-morphisms, $2$-morphisms, ...,  all those invertible.

\begin{nota}
	For $K$ a Kan simplex and $n\geq 0$, let $K_n=Fun(\Delta^n,K)$ be the Kan complex of the $n$-simplexes.
	
	Let $Var$ be the set of all variables of $\lambda$-calculus, for all $m,n\geq 0$, each assignment  $\rho:Var\rightarrow K_n$ ($\rho(t)$ is an $n$-simplex of $K$, for each $t\in Var$), $x\in Var$ and $f\in K_m$, denote by $[f/x]\rho$  the assignment $\rho':Var\rightarrow K$ which coincides with $\rho$, except on $x$, where $\rho'$ takes the value $f$.
\end{nota}

\begin{defin}[h.p.o \cite{MartinezHoDT}]\label{h.p.o}
	Let $\hat{K}$ be an $\infty$-category. The largest Kan complex $K\subseteq\hat{K}$ is a homotopy partial order (h.p.o), if for every $x,y\in K$ one has that $\hat{K}(x,y)$ is contractible or empty. Hence, the Kan complex $K$ admits a relation of h.p.o $\precsim$ defined for each $x,y\in K$ as follows:
	$x\precsim y$ if $\hat{K}(x,y)\neq\emptyset$, hence the pair $(K,\precsim)$ is a h.p.o. (we denote simply by $K$). 
\end{defin}

\begin{defin}[c.h.p.o \cite{MartinezHoDT}] 
	Let $K$ be an h.p.o.
	\begin{enumerate}
		\item An h.p.o $X\subseteq K$ is directed if $X\neq \emptyset$ and for each $x,y\in X$, there exists $z\in X$ such that $x\precsim z$ and $y\precsim z$.
		\item $K$ is a complete homotopy partial order (c.h.p.o) if
		\begin{enumerate}
			\item There are initial objects, i.e.,  $\bot\in K$ is a initial object if for each $x\in K$, $\bot\precsim x$. 
			\item For each directed $X\subseteq\mathcal{K}$ the supremum (or colimit) $\bigcurlyvee X\in\mathcal{K}$ exists. 
		\end{enumerate}
	\end{enumerate}
\end{defin}

\begin{defin}[Continuity \cite{Martinez21}]
	Let $K$ and $K'$  be c.h.p.o's. A functor $F:K\rightarrow K'$ is continuous if $F(\bigcurlyvee X)\simeq\bigcurlyvee F(X)$, where $F(X)$ is the essential image.  
\end{defin}

\begin{defin}[$CHPO$ \cite{Martinez21}]
	Define the subcategory $CHPO\subseteq CAT_\infty$ whose objects are the c.h.p.o's and the morphisms are the continuous functors, where $CAT_\infty$ is the $\infty$-category of the $\infty$-categories \cite{DBLP:books/mk/Lurie}.
\end{defin}

\begin{defin}[Reflexive Kan complex\footnote{In \cite{MartinezHoDT} one can also see the relationship between the reflexive Kan complexes and syntactic homotopic $\lambda$-models, conceptually introduced in \cite{Martinez}, analogously to the semantics of the classic $\lambda$-calculus; same for the relationship between complete partial orders (c.p.o.'s) and syntactic $\lambda$-models.} \cite{MartinezHoDT}]
	A quadruple $\langle K,F,G,\varepsilon\rangle$ is called a reflexive Kan complex, if $K$ is a c.h.p.o such that the full subcategory  $[K\rightarrow K]\subseteq Fun(K,K)$ of the continuous functors is a retract of $K$, via the functors
	$$F:K\rightarrow[K\rightarrow K],\hspace{1cm} G:[K\rightarrow K]\rightarrow K$$  
and the natural equivalence $\varepsilon:FG\rightarrow 1_{[K\rightarrow K]}$. If  there is a natural equivalence $\eta:1_K\rightarrow GF$, the quintuple $\langle K,F,G,\varepsilon,\eta\rangle$ represents an extensional Kan complex.	
\end{defin}

Just as the recursive Domain Equation $X\cong [X\rightarrow X]$ (in the category of the c.p.o's) has an implicit recursive definition of data-types, the ``Homotopy Domain Equation" \cite{Martinez21} $X\simeq [X\rightarrow X]$ (in the $\infty$-category $CHPO$)  would also have a recursive definition of  data-types. A recursively defined computational object (e.g., a proof by mathematical induction) would be of a higher order relative to the classical case, whose interpretation would be recursively defined by a sequence of partial functors $F_i:K\rightarrow K$, over a  Kan complex $K$ weakly ordered, which converges to a total functor $F:K\rightarrow K$, whose details are not among the objectives of this work, but will be developed in future works, when studying the semantics (case of inductive types) of the version of HoTT based on computational paths.

\begin{ejem}[\cite{Martinez21}]
	The c.h.p.o $K_\infty$, which generalizes Dana Scott's c.p.o $D_\infty$, is an extensional Kan complex, since $K_\infty$ is a solution for the Homotopy Domain Equation $X\simeq[X\rightarrow X]$ in the $\infty$-category $\,CHPO$ of c.h.p.o's and continuous functors.
\end{ejem} 

Thus, intuitively, from the computational point of view, we have that a Kan complex, which satisfies the Homotopy Domain Equation, is not only capable of verifying the computability of constructions typical of classical programming languages, as $D_\infty$ does it, but it also has the advantage (over $D_\infty$) of verifying the computability of higher constructions, such as a mathematical proof of some proposition, the proof of the equivalence between two proofs of the same proposition, etc.

\bigskip Besides, in \cite{Martinez21}, several examples of extensional objects (Kan complexes) are presented in the Kleisli $\infty$-category $Kl(P)$.

\begin{defin}[\cite{MartinezHoDT}]\label{interpretation-definition}
	Let $K$ be a reflexive Kan complex (via the morphisms $F$, $G$).
	\begin{enumerate}
	\item For $f,g:\triangle^n\rightarrow K$ (or also $f,g\in K_n$) define the $n$-simplex
	$$f\bullet_{\triangle^n} g=F(f)(g).$$
	In particular for vertices $a,b\in K$,
	$$a\bullet b=a\bullet_{\triangle^0} b=F(a)(b),$$
	besides, $F(a)\bullet(-)=a\bullet(-)$ and $F(-)(b)=(-)\bullet b$ are functors on $K$, then for $f\in K_n$ one defines the $n$-simplexes 
	$$a\bullet f=F(a)(f),\hspace{1cm}f\bullet b=F(f)(b).$$
		
		\item For each $n\geq 0$, let $\rho$ be a valuation at $K_n$. Define the interpretation $\llbracket \,\,\, \rrbracket_\rho:\Lambda\rightarrow K_n$ by induction as follows
		\begin{enumerate}
			\item $\llbracket x \rrbracket_\rho=\rho(x),$
			\item $\llbracket MN \rrbracket_\rho=\llbracket M \rrbracket_\rho\bullet\llbracket N \rrbracket_\rho$,
			\item $\llbracket \lambda x.M \rrbracket_\rho= G(\boldsymbol{\lambda} f.\llbracket M \rrbracket_{[f/x]\rho})$, where $\boldsymbol{\lambda} f.\llbracket M \rrbracket_{[f/x]\rho}=\llbracket M \rrbracket_{[-/x]\rho}:K\rightarrow K_n$. 
		\end{enumerate}
	\end{enumerate}
\end{defin}

\begin{rem}\label{Remark-1}
	Given  $g\in K_n$ and $\rho:Var\rightarrow K_n$, the higher $\beta$-contraction is interpreted by
	\begin{align*}
	\llbracket \lambda x.M \rrbracket_\rho\bullet g&=G(\boldsymbol{\lambda} f.\llbracket M \rrbracket_{[f/x]\rho})\bullet g \\
	&=F(G(\boldsymbol{\lambda} f.\llbracket M \rrbracket_{[f/x]\rho}))(g) \\
	&\xrightarrow{(\varepsilon_{\boldsymbol{\lambda} f.\llbracket M \rrbracket_{[f/x]\rho}})_g} (\boldsymbol{\lambda} f.\llbracket M \rrbracket_{[f/x]\rho}) (g) \\
	&=\llbracket M \rrbracket_{[g/x]\rho},
	\end{align*}
	where  $\varepsilon_{\boldsymbol{\lambda} f.\llbracket M \rrbracket_{[f/x]\rho}}$ is the natural equivalence, induced by $\varepsilon$, between the functors $F(G(\boldsymbol{\lambda} f.\llbracket M \rrbracket_{[f/x]\rho}),\boldsymbol{\lambda} f.\llbracket M \rrbracket_{[f/x]\rho}:K\rightarrow K_n$. Hence $(\varepsilon_{\boldsymbol{\lambda} f.\llbracket M \rrbracket_{[f/x]\rho}})_g$ is the equivalence induced by the $n$-simplex $g$ in $K$. 
	
	  \medskip Hence, if $\langle K, F, G, \varepsilon,\eta\rangle$ is extensional and $n=0$, so that the $\beta$-contraction is modelled by $\varepsilon:FG\rightarrow 1$; the (reverse) $\eta$-contraction  is modelled by  $\eta:1\rightarrow GF$. Besides, if $n>0$, we have that the natural equivalences $\varepsilon$ and $\eta$ will induce higher $\beta$-contractions and (reverse) $\eta$-contractions  respectively, as we will see later.
\end{rem}

\begin{prop}\label{proposition1}
	Let $x,y,M,N,P$ be $\lambda$-terms. The interpretations of $\beta$-reductions 
	\[\xymatrix{
		& {(\lambda x. M) ((\lambda y. N) P)}\ar[d]_{1\beta}\ar[r]^{1\beta} &
		[(\lambda y. N)  P/x]M \ar[d]^{[1\beta]}\\
		& (\lambda x. M) ([P/y]N)\ar[r]_{1\beta} & [([P/y]N)  /x]M
	}\]
	are equivalent in every reflexive Kan complex $\langle K,F,G,\varepsilon\rangle$. 
\end{prop}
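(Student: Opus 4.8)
The plan is to recognise the square as the naturality square of the natural equivalence $\varepsilon$ evaluated at the $1$-simplex produced by the inner redex. First I would fix the abstraction functor $\Phi=\boldsymbol{\lambda} f.\llbracket M\rrbracket_{[f/x]\rho}\colon K\to K_n$, so that $\llbracket \lambda x.M\rrbracket_\rho=G(\Phi)$ and the context functor $\llbracket\lambda x.M\rrbracket_\rho\bullet(-)$ equals $F(G(\Phi))(-)$ as a functor $K\to K_n$. Writing $q=\llbracket(\lambda y.N)P\rrbracket_\rho$ and $r=\llbracket[P/y]N\rrbracket_\rho$, Remark~\ref{Remark-1} applied to the inner abstraction $\boldsymbol{\lambda} h.\llbracket N\rrbracket_{[h/y]\rho}$ produces a distinguished path $\beta_N\colon q\to r$ interpreting the inner contraction $(\lambda y.N)P\to[P/y]N$. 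Invoking the standard substitution lemma $\llbracket[A/x]M\rrbracket_\rho=\llbracket M\rrbracket_{[\llbracket A\rrbracket_\rho/x]\rho}$ (which follows by induction on $M$ from Definition~\ref{interpretation-definition}), the four corners of the square are $\llbracket\lambda x.M\rrbracket_\rho\bullet q=F(G(\Phi))(q)$, $\llbracket M\rrbracket_{[q/x]\rho}=\Phi(q)$, $\llbracket\lambda x.M\rrbracket_\rho\bullet r=F(G(\Phi))(r)$ and $\llbracket M\rrbracket_{[r/x]\rho}=\Phi(r)$.

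Next I would interpret each of the four reduction arrows. By Remark~\ref{Remark-1} the top and bottom ($1\beta$) arrows are exactly the components $(\varepsilon_\Phi)_q$ and $(\varepsilon_\Phi)_r$ of the natural equivalence $\varepsilon_\Phi\colon F(G(\Phi))\Rightarrow\Phi$. The left ($1\beta$) arrow reduces the inner redex under the context $F(G(\Phi))$, so it is the image $F(G(\Phi))(\beta_N)$ of the path $\beta_N$; the right ($[1\beta]$) arrow reduces the copies of the inner redex substituted inside $M$, which I would identify by functoriality with $\Phi(\beta_N)$. The square thus becomes
\[
\xymatrix{
F(G(\Phi))(q)\ar[r]^{(\varepsilon_\Phi)_q}\ar[d]_{F(G(\Phi))(\beta_N)} & \Phi(q)\ar[d]^{\Phi(\beta_N)}\\
F(G(\Phi))(r)\ar[r]_{(\varepsilon_\Phi)_r} & \Phi(r)
}
\]
which is precisely the naturality square of $\varepsilon_\Phi$ along $\beta_N\colon q\to r$.

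Finally I would conclude: since $\varepsilon$ is a natural equivalence, its component $\varepsilon_\Phi$ is a natural transformation of functors $K\to K_n$, and naturality supplies a homotopy filling this square, i.e.\ the composites $\Phi(\beta_N)\circ(\varepsilon_\Phi)_q$ and $(\varepsilon_\Phi)_r\circ F(G(\Phi))(\beta_N)$ are equivalent. This is exactly the claimed equivalence of the two interpreted reduction paths, and it uses only the reflexive structure $\langle K,F,G,\varepsilon\rangle$, so the extensional datum $\eta$ is not required. The main obstacle I anticipate is not the naturality step itself but the bookkeeping that feeds it: verifying the substitution lemma at the level of $n$-simplices, and—more delicately—showing that the simultaneous $[1\beta]$-reduction of \emph{all} occurrences of the inner redex inside $M$ is genuinely computed by applying the single functor $\Phi$ to the single path $\beta_N$, rather than by some composite depending on an enumeration of the occurrences of $x$. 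Establishing that this functorial image is well defined and independent of such choices is where the real content lies; once it is secured, naturality of $\varepsilon$ closes the argument.
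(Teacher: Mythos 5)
Your proposal is correct and follows essentially the same route as the paper's own proof: the paper sets $L=\boldsymbol{\lambda} f.\llbracket M\rrbracket_{[f/x]\rho}$ (your $\Phi$), $R=FG(L)$, takes the path $f$ interpreting the inner contraction (your $\beta_N$), and identifies the square as the naturality square of $\varepsilon_L$ along that path, with vertical arrows $R(f)$ and $L(f)$ exactly as you do. Your explicit appeal to the substitution lemma and your worry about identifying $[1\beta]$ with $\Phi(\beta_N)$ are points the paper passes over silently, but they do not change the argument, which in both cases closes by naturality of $\varepsilon$.
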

\begin{proof}
	Let $a=\llbracket P \rrbracket_\rho$,  $\llbracket \lambda y.N \rrbracket_\rho\bullet a\xrightarrow{f}\llbracket N \rrbracket_{[a/y]\rho}$,  $R=FG(\boldsymbol{\lambda} f.\llbracket M \rrbracket_{[f/x]\rho})$, $L=\boldsymbol{\lambda} f.\llbracket M \rrbracket_{[f/x]\rho}$ and $\varepsilon'=\varepsilon_{\boldsymbol{\lambda} f.\llbracket M \rrbracket_{[f/x]\rho}}$. One has that the natural equivalence $\varepsilon':R\rightarrow L$ makes the following diagram (weakly) commute:
	\[\xymatrix{
		& {R(\llbracket \lambda y. N \rrbracket_{\rho}\bullet a)}\ar[d]_{R(f)}\ar[r]^{\varepsilon'_{\llbracket \lambda y. N \rrbracket_{\rho}\bullet a}} &
		\,\,L(\llbracket\lambda y. N \rrbracket_{\rho}\bullet a) \ar[d]^{L(f)}\\
		& R(\llbracket N \rrbracket_{[a/y]\rho})\ar[r]_{\varepsilon'_{\llbracket N \rrbracket_{[a/y]\rho}}} & L(\llbracket N \rrbracket_{[a/y]\rho})
		& 
	}\]
	which, by  Remark \ref{Remark-1}, corresponds to the (weakly)  commutative diagram
	\[\xymatrix{
		& {\llbracket\lambda x. M\rrbracket_\rho\bullet (\llbracket\lambda y. N\rrbracket_\rho\bullet a)}\ar[d]_{R(f)}\ar[r]^{\hspace{0.5cm}\varepsilon'_{\llbracket \lambda y. N \rrbracket_{\rho}\bullet a}} &
		\,\, \llbracket M \rrbracket_{[\llbracket\lambda y. N\rrbracket_\rho\bullet a/x]} \ar[d]^{L(f)}\\
		& \llbracket\lambda x. M\rrbracket_\rho\bullet \llbracket N \rrbracket_{[a/y]\rho}\ar[r]_{\hspace{0.5cm}\varepsilon'_{\llbracket N \rrbracket_{[a/y]\rho}}} & \llbracket M\rrbracket_{[\llbracket N \rrbracket_{[a/y]\rho}/x]}
		& 
	}\] 	
\end{proof}

\begin{ejem}
	The $\lambda$-term $(\lambda x.u)((\lambda y.v)z)$ has two $\beta$-reductions:
	\[\xymatrix{
		& {(\lambda x. u )((\lambda y. v) z)}\ar[d]_{1\beta}\ar[r]^{1\beta} &
		[(\lambda y. v)  z/x]u \ar[d]^{[1\beta]}\\
		& (\lambda x. u) ([z/y]v)\ar[r]_{1\beta} & [v/x]u
	}\]
	making $u=M$, $v=N$ and $z=P$, by Proposition \ref{proposition1}, the interpretations of these $\beta$-reductions are equivalent in all reflexive Kan complexes $\langle K,F,G,\varepsilon\rangle$. 
\end{ejem}

Next, we shall give examples where the reductions of $\lambda$-terms are not equivalent.

\begin{ejem}
	The $\lambda$-term $(\lambda x.(\lambda y.yx)z)v$ has the $\beta$-reductions
	\[\xymatrix{
		& {(\lambda x.(\lambda y.yx)z)v}\ar[d]^{1\beta}\ar[r]^{1\beta} &
		(\lambda y.yv)z \ar[d]^{1\beta}\\
		& (\lambda x.zx)v\ar[r]^{1\beta} & zv
	}\]
	Given a reflexive Kan complex $\langle K,F,G,\varepsilon\rangle$. Let $\rho(v)=c$, $\rho(z)=d$ vertices at $K$ and  $R=FG$. The interpretation of the $\beta$-reductions of $(\lambda x.(\lambda y.yx)z)v$ depends on solving the diagram equation
	\[\xymatrix{
		& {R(\boldsymbol{\lambda} a.R(\boldsymbol{\lambda} b.b\bullet a)(d))(c)}\ar[d]_{(R(?))_c}\ar[r]^{\hspace{0.6cm}(\varepsilon_f)_c} & R(\boldsymbol{\lambda} b.b\bullet c)(d) \ar[d]^{(\varepsilon_{\boldsymbol{\lambda}b.b\bullet c})_d}\\
		& R(\boldsymbol{\lambda} a.d\bullet a)(c)\ar[r]_{(\varepsilon_g)_c} & d\bullet c
	}\]
where $f=\boldsymbol{\lambda}a.R(\boldsymbol{\lambda} b.b\bullet a)(d)$ and $g=\boldsymbol{\lambda}a.d\bullet a$ are functors at $[K\rightarrow K]$. One has  $h_a=(\varepsilon_{\boldsymbol{\lambda}b.b\bullet a})_d:f(a)\rightarrow g(a)$ for each vertex $a\in K$, but $h_a$ is not necessarily a functorial equivalence in any reflexive Kan complex $\langle K,F,G,\varepsilon\rangle$ to get the diagram  to commute:
	\[\xymatrix{
	& {R(f)(c)}\ar[d]_{(R(h?))_c}\ar[r]^{(\varepsilon_f)_c} & f(c) \ar[d]^{h_c}\\
	& R(g)(c)\ar[r]_{(\varepsilon_g)_c} & g(c)
}\]
\end{ejem}

\begin{ejem}
	The $\lambda$-term $(\lambda z.xz)y$ has the $\beta\eta$-contractions 
	\[\xymatrix{
		& {(\lambda z.xz)y}\ar@/^2mm/[r]^{\,\,\,1\beta} \ar@/_2mm/[r]_{\,\,\,1\eta} & xy
	}\]
	
	Take an extensional Kan complex $\langle K,F,G,\varepsilon,\eta\rangle$. Let $\rho(x)=a$ and $\rho(y)=b$ be vertices of $K$. The interpretation of $\lambda$-term is given by: $\llbracket(\lambda z.xz)y \rrbracket_\rho=\llbracket\lambda z.xz \rrbracket_\rho\bullet b=G(\boldsymbol{\lambda} c.F(a)(c))\bullet b=G(F(a))\bullet b=(FGF)(a)(b)$. The interpretation of the $\beta\eta$-contractions corresponds to the degenerated diagrams 
	\[\xymatrix{
		& (FGF)(a)(b)\ar[d]_{(\varepsilon_{F(a)})_b}\ar[dr]^{1} & &F(a)(b)\ar[d]_{(F(\eta_a))_b}\ar[dr]^{1}\\
		&F(a)(b)\ar[r]_{(F(\eta_a))_b} & (FGF)(a)(b) & (FGF)(a)(b)\ar[r]_{\hspace{0.4cm}(\varepsilon_{F(a)})_b}  & F(a)(b)
	}\]
	But the diagrams do not necessarily commute in every extensional Kan complex $\langle K,F,G,\varepsilon,\eta\rangle$. 
\end{ejem}
For examples of higher extensional $\lambda$-models see \cite{Martinez21}.

\medskip It is known that the types of HoTT correspond to $\infty$-groupoids. Taking advantage of this situation, for a reflexive Kan complex, let us define the theory of equality on that Kan complex ($\infty$-groupoid) as follows.

\begin{defin}[Theory of an extensional Kan complex]\label{Definition-Theory-Kan-Complex}
	Let $\mathcal{K}=\langle K,F,G,\varepsilon,\eta\rangle$ be an extensional Kan complex. Define the theory of equality of $\mathcal{K}$ as the class
	$$Th_1(\mathcal{K})=\{M=N\,| \, \llbracket M \rrbracket_\rho
	\simeq\llbracket N \rrbracket_\rho \,\, \text{for all}\,\,\rho:Var\rightarrow K\}$$
	where $\llbracket M \rrbracket_\rho\simeq\llbracket N \rrbracket_\rho$ is the equivalence between vertices of $K$ for some equivalence $\llbracket s \rrbracket_\rho:\llbracket M \rrbracket_\rho\rightarrow\llbracket N \rrbracket_\rho$, and ``$s$" denotes the conversion between $\lambda$-terms $M$ and $N$ induced by $\llbracket s \rrbracket_\rho$ for all evaluation $\rho$.
\end{defin}

In the Definition \ref{Definition-Theory-Kan-Complex}, notice that the equivalence $\llbracket M \rrbracket_\rho\simeq\llbracket N \rrbracket_\rho$ for all $\rho$, induces the intentional equality $M=N$, which can be seen as an identity type based on computational paths \cite{Queiroz}; the conversion $s$ may also be seen as a computational proof (a finite sequence of basic rewrites \cite{Queiroz} induced by $K$) of the proposition $M=N$ in the theory $Th_1(\mathcal{K})$.

\begin{rem}
	If $s$ is a $\beta$-contraction or $\eta$-contraction and the functor $F$ is not surjective for objects, the equality $M=_{1\beta} N:K$ or $M=_{1\eta} N:K$ is not necessarily a judgmental equality (as it happens in HoTT); $\llbracket M \rrbracket_\rho$ and $\llbracket N \rrbracket_\rho$ may be different vertices in $K$. Thus,   the theory $Th_1(\mathcal{K})$ may be seen as the family of all the identity types which are inhabited by paths which are not necessarily equal to the reflexive path $\mathtt{refl}_M$. 
\end{rem} 

\begin{nota}\label{nota1}
	Let $M$ and $N$ be $\lambda$-terms ($M,N\in\Lambda_0$) and $\mathcal{K}$ be an extensional Kan complex. Denote by $\Lambda_0(K)(M,N)$ the set of all the $1$-conversions from $M$ to $N$ induced by $\mathcal{K}$. We write $\Lambda_1(\mathcal{K}):=\bigcup_{M,N\in\Lambda_0}\Lambda_0(\mathcal{K})(M,N)$  for the family of all $1$-conversions induced by $\mathcal{K}$.
	
	Let $s,t\in\Lambda_0(\mathcal{K})(M,N)$. Denote by $\Lambda_0(\mathcal{K})(M,N)(s,t)$  the set of all the $2$-conversions from $s$ to $t$. And let $\Lambda_2(\mathcal{K}):=\bigcup_{s,t\in\Lambda_1}\bigcup_{M,N\in\Lambda_0}\Lambda_0(\mathcal{K})(M,N)(s,t)$  be the family of all $2$-conversions induced by $\mathcal{K}$, and so on we keep iterating for the families $\Lambda_3(\mathcal{K})$, $\Lambda_4(\mathcal{K}),\ldots$.
\end{nota}

Since $\mathcal{K}$ is a reflexive Kan complex,  $Th_1(\mathcal{K})$ is an intentional $\lambda$-theory of $1$-equality which contains the theory $\lambda\beta\eta$. Iterate again, we have the $\lambda$-theory of $2$-equality  
	$$Th_2(\mathcal{K})=\{r=s\,\,| \,\, \forall\rho\,(\llbracket r \rrbracket_\rho\simeq\llbracket s \rrbracket_\rho)\text{\hspace{0.1cm} and\hspace{0.2cm}} r,s\in\Lambda_0(\mathcal{K})(M,N)\}.$$ 
If we keep iterating, we can see that the reflexive Kan complex $\mathcal{K}$ will certainly induce a $\lambda$-theory of higher equality given by the inverse and direct limit 
$$Th(\mathcal{K})=\bigcup_{n\geq 1}Th_n(\mathcal{K}).$$

Just as $Th_1(\mathcal{K})$ contains $\lambda\beta\eta$, $Th(\mathcal{K})$ will contain a (simple version of) `Homotopy Type-Free Theory', defined as follows.  

\begin{defin}[Homotopy Type-Free Theory]\label{Definition-Homotopy-Type-Free-Theory}
	A Homotopy Type-Free Theory (HoTFT)  consists of the least theory of equality, that is
	$$HoTFT:=\bigcap\{Th(\mathcal{K})\,|\,\mathcal{K} \,\text{is an extensional Kan complex}\}.$$
	And for each $n\geq 0$ let $$\Lambda_n:=\bigcap\{\Lambda_n(\mathcal{K})\,|\,\mathcal{K} \,\text{is an extensional Kan complex}\}$$
	be the set of $n\beta\eta$-conversions.
\end{defin}

For example, let $\mathcal{K}=\langle K,F,G,\varepsilon,\eta\rangle$ be an extensional Kan complex and $x$, $M$ and $N$ $\lambda$-terms. By Definition \ref{Definition-Homotopy-Type-Free-Theory}, the $\beta$-contraction $(\lambda x.M)N\xrightarrow{1\beta}[N/x]M$ inhabits the set $\Lambda_0((\lambda x.M)N,[N/x]M)$; 
$$\llbracket 1\beta \rrbracket_\rho=(\varepsilon_{\llbracket M \rrbracket_{[-/x]\rho}})_{\llbracket N \rrbracket_\rho}\in K(\llbracket (\lambda x.M)N \rrbracket_\rho,\llbracket [N/x]M \rrbracket_\rho),$$ 
and the $\eta$-contraction $\lambda x.Mx\xrightarrow{1\eta}M$, $x\notin FV(M)$, belongs to $\Lambda_0(K)(\lambda x.Mx,M)$; 
$$\llbracket 1\eta \rrbracket_\rho=\eta_{\llbracket M \rrbracket_{\rho}}\in K(\llbracket\lambda x.Mx \rrbracket_\rho,\llbracket M \rrbracket_\rho).$$ 
If $t$ is a $\beta\eta$-conversion from $\lambda$-term $M$ to $N$, by Definition \ref{Definition-Homotopy-Type-Free-Theory}, $t\in\Lambda_0(M,N)$. For $x,P$ $\lambda$-terms, we have the vertices $\llbracket\lambda x.P \rrbracket_\rho\in K$ and $\llbracket t \rrbracket_{\rho}\in K(\llbracket M \rrbracket_\rho,\llbracket N \rrbracket_\rho)$. Thus, $\llbracket(\lambda x.P)t \rrbracket_\rho=\llbracket\lambda x.P \rrbracket_\rho\bullet\llbracket t \rrbracket_{\rho}\in K(\llbracket(\lambda x. P) M \rrbracket_\rho,\llbracket (\lambda x. P)N \rrbracket_\rho)$ and $\llbracket P \rrbracket_{[\llbracket t \rrbracket_{\rho}/x]\rho}\in K(\llbracket P \rrbracket_{[\llbracket M \rrbracket_{\rho}/x]\rho},\llbracket P \rrbracket_{[\llbracket N \rrbracket_{\rho}/x]\rho})$, where $[\llbracket t \rrbracket_{\rho}/x]\rho:Var\rightarrow K_1$ is an evaluation $\rho'(x)=\llbracket t \rrbracket_{\rho}$ and ($n$-times degeneration of vertex $\rho(r)$) $\rho'(r)=s^n(\rho(r))$  if $r\neq x$. By Definition \ref{Definition-Homotopy-Type-Free-Theory}, $(\lambda x.P)t\in\Lambda_0((\lambda x.P)M,(\lambda x.P)N)$ and $\llbracket P \rrbracket_{[\llbracket t \rrbracket_{\rho}/x]\rho}\in K(\llbracket P \rrbracket_{[\llbracket M \rrbracket_{\rho}/x]\rho},\llbracket P \rrbracket_{[\llbracket N \rrbracket_{\rho}/x]\rho})$. But
$$\llbracket(\lambda x.P)t \rrbracket_\rho\xrightarrow{(\varepsilon_{\llbracket P \rrbracket_{[-/x]\rho}})_{\llbracket t \rrbracket_\rho}}\llbracket P \rrbracket_{[\llbracket t \rrbracket_{\rho}/x]\rho},$$
So $(\lambda x.P)t=[t/x]P$ and induces the $2\beta$-contraction 
$$(\lambda x.P)t \xrightarrow{2\beta_{P,t}}[t/x]P,$$
corresponding to a similar diagram to that of Proposition \ref{proposition1}, i.e.,
	\[\xymatrix{
	& {(\lambda x. P)M}\ar[d]_{(\lambda x. P)t}\ar[r]^{1\beta_{M}} \ar@{}[dr]|{\Longrightarrow_{2\beta_{t}}}&
	[M/x]P \ar[d]^{[t/x]M}\\
	& (\lambda x.P)N\ar[r]_{1\beta_{N}} & [N/x]P
}\] 
Hence $2\beta_{t}\in\Lambda_0((\lambda x.P)M,[N/x]P)(\tau(1\beta_{M},[t/x]M),\tau((\lambda x.P)t,1\beta_{N}))$, where $\tau(r,s)$ is the concatenation of the conversions $r\in \Lambda(a,b)$ and $s\in\Lambda(b,c)$. On the other hand, for $y\notin FV(t)$ one has the equivalence 
$$\llbracket t \rrbracket_\rho\xrightarrow{\eta_{\llbracket t \rrbracket_\rho}}\llbracket \lambda y. ty \rrbracket_\rho,$$
that is, $(\lambda y.ty)=t$ and induces the $2\eta$-contraction
$$(\lambda y.ty)\xrightarrow{2\eta_t}t,$$
which corresponds to the diagram
	\[\xymatrix{
 & \lambda y.My\ar[d]_{\lambda y.ty}\ar[r]^{\hspace{0.3cm}n\eta_r}\ar@{}[dr]|{\Longrightarrow_{2\eta_{t}}} & M\ar[d]^{t}\\
 &\lambda y.Ny\ar[r]_{\hspace{0.3cm}n\eta_s} & N
}\]
In general, if $t\in\Lambda_{n-1}$, the equivalences 
$$\llbracket(\lambda x.P)t \rrbracket_\rho\xrightarrow{(\varepsilon_{\llbracket P \rrbracket_{[-/x]\rho}})_{\llbracket t \rrbracket_\rho}}\llbracket P \rrbracket_{[\llbracket t \rrbracket_{\rho}/x]\rho}, \hspace{0.5cm}\llbracket t \rrbracket_\rho\xrightarrow{\eta_{\llbracket t \rrbracket_\rho}}\llbracket \lambda y. ty \rrbracket_\rho$$ 
in every extensional Kan complex $K$, induce the $(n)\beta\eta$-contractions 
$$(\lambda x.P)t \xrightarrow{n\beta_{t}}[t/x]P, \hspace{0.5cm}(\lambda y.ty)\xrightarrow{n\eta_t}t.$$
which explains the following Corollary. 

\begin{corol}
	If  $x,y,P$ be $\lambda$-terms, $n\geq 1$ and $t\in\Lambda_n(r,s)$ with $y\notin FV(t)$, then the interpretation from diagrams
	\[\xymatrix{
	 &(\lambda x. P)r\ar[d]_{(\lambda x. P)t}\ar[r]^{n\beta_{r}} &
		[r/x]P \ar[d]^{[t/x]M} &&\lambda y.ry\ar[d]_{\lambda y.ty}\ar[r]^{\hspace{0.3cm}n\eta_r} & r\ar[d]^{t}\\
 & (\lambda x. P)s\ar[r]_{n\beta_{s}} & [s/x]P &&\lambda y.sy\ar[r]_{\hspace{0.3cm}n\eta_s} & s
	}\]
	commutes in every extensional Kan complex $K$.	
\end{corol}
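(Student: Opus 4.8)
The plan is to recognize both diagrams as the naturality squares of the natural equivalences $\varepsilon$ and $\eta$, evaluated at the higher morphism $\llbracket t\rrbracket_\rho$, exactly along the lines of Proposition \ref{proposition1}, and then to translate each corner and each edge back into the corresponding term interpretation by means of Definition \ref{interpretation-definition} and Remark \ref{Remark-1}. The only new feature with respect to Proposition \ref{proposition1} is that $t$ is now an arbitrary $n$-conversion instead of a fixed $\beta$-reduction; since $F$, $G$, $\varepsilon$ and $\eta$ act on all simplices of $K$, the same argument applies one dimension higher.

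For the $\beta$-square, fix a valuation $\rho$ and write $R=FG(\boldsymbol{\lambda} f.\llbracket P\rrbracket_{[f/x]\rho})$, $L=\boldsymbol{\lambda} f.\llbracket P\rrbracket_{[f/x]\rho}$ and $\varepsilon'=\varepsilon_{\boldsymbol{\lambda} f.\llbracket P\rrbracket_{[f/x]\rho}}:R\rightarrow L$, as in the proof of Proposition \ref{proposition1}. The conversion $t\in\Lambda_n(r,s)$ interprets as a higher morphism $\llbracket t\rrbracket_\rho:\llbracket r\rrbracket_\rho\rightarrow\llbracket s\rrbracket_\rho$, and the naturality of $\varepsilon'$ at $\llbracket t\rrbracket_\rho$ yields a weakly commuting square with horizontal edges $\varepsilon'_{\llbracket r\rrbracket_\rho}$, $\varepsilon'_{\llbracket s\rrbracket_\rho}$ and vertical edges $R(\llbracket t\rrbracket_\rho)$, $L(\llbracket t\rrbracket_\rho)$. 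By Remark \ref{Remark-1} one has $R(\llbracket r\rrbracket_\rho)=\llbracket\lambda x.P\rrbracket_\rho\bullet\llbracket r\rrbracket_\rho=\llbracket(\lambda x.P)r\rrbracket_\rho$ and $L(\llbracket r\rrbracket_\rho)=\llbracket P\rrbracket_{[\llbracket r\rrbracket_\rho/x]\rho}=\llbracket[r/x]P\rrbracket_\rho$ (and likewise for $s$), so that $\varepsilon'_{\llbracket r\rrbracket_\rho}$ and $\varepsilon'_{\llbracket s\rrbracket_\rho}$ are precisely the interpretations of $n\beta_r$ and $n\beta_s$; moreover $R(\llbracket t\rrbracket_\rho)=\llbracket\lambda x.P\rrbracket_\rho\bullet\llbracket t\rrbracket_\rho$ interprets $(\lambda x.P)t$ and $L(\llbracket t\rrbracket_\rho)=\llbracket P\rrbracket_{[\llbracket t\rrbracket_\rho/x]\rho}$ interprets $[t/x]P$. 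Hence the naturality square coincides with the $\beta$-diagram of the statement, which therefore commutes weakly.

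For the $\eta$-square the argument is dual. Since $y\notin FV(t)$ one computes $\llbracket\lambda y.ry\rrbracket_\rho=GF(\llbracket r\rrbracket_\rho)$, so the interpretation of $n\eta_r$ is the component $\eta_{\llbracket r\rrbracket_\rho}:\llbracket r\rrbracket_\rho\rightarrow GF(\llbracket r\rrbracket_\rho)$ read in the contracting direction via its inverse in the Kan complex, and similarly for $s$. Naturality of $\eta:1_K\rightarrow GF$ at $\llbracket t\rrbracket_\rho$ then produces the weakly commuting square with horizontal edges $\eta_{\llbracket r\rrbracket_\rho}$, $\eta_{\llbracket s\rrbracket_\rho}$, right edge $\llbracket t\rrbracket_\rho$ and left edge $GF(\llbracket t\rrbracket_\rho)=\llbracket\lambda y.ty\rrbracket_\rho$, which is exactly the $\eta$-diagram of the statement.

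The main obstacle is of a purely higher-categorical nature. Because $\varepsilon$ and $\eta$ are natural equivalences rather than strict transformations, each square commutes only up to a filling $2$-cell, and one must ensure that this naturality persists when the indexing morphism $\llbracket t\rrbracket_\rho$ is itself an $n$-dimensional simplex rather than a vertex or an edge. This is guaranteed by the fact that $F$ and $G$ are maps of Kan complexes and $\varepsilon$, $\eta$ are simplicial natural equivalences, so their naturality is expressed uniformly on all simplices of $K$. If one prefers a more elementary route, the claim follows by induction on $n$: the base case $n=1$ is the computation carried out immediately before the statement, and each inductive step repeats the two naturality arguments above one dimension higher.
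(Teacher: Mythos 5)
Your proposal is correct and takes essentially the same route as the paper: the text preceding the corollary obtains the fillers as the components $(\varepsilon_{\llbracket P \rrbracket_{[-/x]\rho}})_{\llbracket t \rrbracket_\rho}$ and $\eta_{\llbracket t \rrbracket_\rho}$ of the natural equivalences evaluated at the higher simplex $\llbracket t \rrbracket_\rho$, which is exactly your naturality-square argument modelled on Proposition \ref{proposition1}. Your explicit identification of the corners and edges via Remark \ref{Remark-1} and Definition \ref{interpretation-definition} (including reading $n\eta$ through the homotopy inverse of $\eta_{\llbracket r \rrbracket_\rho}$) merely spells out what the paper leaves implicit.
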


Thus, any reflexive Kan complex inductively induces, for each $n\geq 1$, from an $(n)\beta\eta$-conversion $t$ to the $(n+1)\beta\eta$-contractions

	\[\xymatrix{
	&(\lambda x. P)r\ar[d]_{(\lambda x. P)t}\ar[r]^{n\beta_{r}} \ar@{}[dr]|{\Longrightarrow_{(n+1)\beta_{t}}}&
	[r/x]P \ar[d]^{[t/x]M} &&\lambda y.ry\ar[d]_{\lambda y.ty}\ar[r]^{\hspace{0.3cm}n\eta_r}\ar@{}[dr]|{\Longrightarrow_{(n+1)\eta_{t}}} & r\ar[d]^{t}\\
	& (\lambda x. P)s\ar[r]_{n\beta_{s}} & [s/x]P &&\lambda y.sy\ar[r]_{\hspace{0.3cm}n\eta_s} & s
}\]
and these, in their turn, define the $(n+1)\beta\eta$-conversions, of $(n)\beta\eta$-conversion, which would inhabit the set $\Lambda_{n+1}$.

\section{Extensional Kan complexes and Identity types based on higher $\lambda$-terms}

In this section, we use the extensionality of any extensional Kan complex $K$ to define the set of $\lambda^{n}$-terms $\Lambda^{n-1}(a,b)$ induced by the space $K_{n-1}(\llbracket a \rrbracket_\rho,\llbracket b \rrbracket_\rho)$, which would be a type-free version of the identity type $Id_A(a,b)$ based on computational paths of \cite{Queiroz}. And finally we see the relationship between the set $\Lambda^n$ of all the $\lambda^n$-terms and the set $\Lambda_n$ from the previous section.

\medskip By Definition of Cartesian product of simplicial sets one has that for each $n\geq 0$, $(K\times K)_n=K_n\times K_n$. If $\mathcal{K}=\langle K,F,G,\varepsilon,\eta\rangle$ is an extensional Kan complex, then $K_n\times K_n\simeq K_n$, that is 
$K_n\simeq [K_n\rightarrow K_n]$. Hence  $\mathcal{K}_n=\langle K_n,F,G,\varepsilon,\eta\rangle$ is an extensional Kan complex for each $n\geq 0$. 

\medskip For example the case $n=1$, one has that $\llbracket 1\beta \rrbracket_\rho,\llbracket 1\eta \rrbracket_\rho\in K_1$, that is $1\beta,1\eta$ would be `$\lambda^1$-terms'. Hence, for any  $\beta\eta$-conversion $r$ between $\lambda$-terms,  $\llbracket r \rrbracket_\rho\in K_1$, i.e., $r$ would be also a `$\lambda^1$-term' (denoted by $r\in\Lambda^1$). If $h(r)$ is a $\beta\eta$-conversion which depends on the $\beta\eta$-conversion $r$, by extensionality of $K_1$, one has 
$$\llbracket \lambda^1r.h(r) \rrbracket_\rho:=G(\llbracket h(r) \rrbracket_{[-/r ]_\rho})\in K_1,$$
 where  $\llbracket h(r) \rrbracket_{[-/r]\rho}:K_1\rightarrow K_1$.

\medskip Thus, for $m,r\in \Lambda^0(c,d)$ ($\lambda^1$-terms from $c$ to $d$) the `$\lambda^1$-term' $\lambda^1r.h(r)$ can define the $\beta_{2}$-contraction
$$(\lambda^1r.h(r))m\xrightarrow{\beta_{2}} h(m/r)$$
where 
$$\llbracket (\lambda^1r.h(r))m \rrbracket_\rho:=\llbracket \lambda^1r.h(r) \rrbracket_\rho\bullet_{\Delta^1}\llbracket m \rrbracket_\rho=F(\llbracket \lambda^1r.h(r) \rrbracket_\rho)(\llbracket m \rrbracket_\rho)\in K_1,$$
hence, $(\lambda^1r.h(r))m$ can be seen as a $\lambda^1$-term. 

\medskip The question arises:  $\llbracket \beta_2\rrbracket_\rho\in K_2$? To answer this question, let us first prove the following proposition.

\begin{prop}\label{proposition5}
	Let $\mathcal{K}=\langle K, F,G,\varepsilon,\eta\rangle$ be an extensional Kan complex. For each vertex $a,b,c,d\in K$ one has an equivalence of homotopy
	$$K(a,b)\simeq [K(c,d)\rightarrow K(a\bullet c,b\bullet d)],$$
	and in general, for $n\geq 1$ and the vertices $a_{i+1},b_{i+1}\in K(a_0,b_0)\cdots(a_i,b_i)$ and $c_{i+1},d_{i+1}\in K(c_0,d_0)\cdots(c_i,d_i)$ with $0\leq i\leq n-1$, there is an equivalence
	$$K(a_0,b_0)\cdots (a_n,b_n)\simeq [K(c_0,d_0)\cdots (c_n,d_n)\rightarrow K(a_0\bullet c_0,b_0\bullet d_0)\cdots (a_n\bullet c_n,b_n\bullet d_n)]$$
	\begin{proof}
		Since $\mathcal{K}$ is extensional, there is the equivalence $F':K\times K\rightarrow K$. Hence
		$$K(a,b)\times K(c,d)=(K\times K)((a,c),(b,d))\simeq K(F'(a,c),F'(b,d)),$$
		that is, 
		$$K(a,b)\simeq [K(c,d)\rightarrow K(F(a)(c),F(b)(d))]=[K(c,d)\rightarrow K(a\bullet c,b\bullet d)].$$
		Let $K_n(p_n,q_n)=K(p_0,q_0)\cdots(p_n,q_n)$ for each $p_i,q_i\in K_i$ with $0\leq i\leq n$. Given the Induction Hypothesis (IH)
		$$K_n(a_n,b_n)\times K_n(c_n,d_n)\simeq K_n(F'(a_n,c_n),F'(b_n,d_n)),$$
		for the case $(n+1)$ one has
		\begin{align*}
			&K_{n+1}(a_{n+1},b_{n+1})\times K_{n+1}(c_{n+1},d_{n+1})= \\
			&=K_{n}(a_n,b_n)(a_{n+1},b_{n+1})\times K_{n}(c_n,d_n)(c_{n+1},d_{n+1}) \\
			&=(K_{n}(a_n,b_n)\times K_{n}(c_n,d_n))((a_{n+1},c_{n+1}),(b_{n+1},d_{n+1})) \\
			&\simeq K_n(F'(a_n,c_n),F'(b_n,d_n))(F'(a_{n+1},c_{n+1}),F'(b_{n+1},d_{n+1}))\hspace{0.5cm}\text{(by I.H)} \\
			&=K_{n+1}(F'(a_{n+1},c_{n+1}),F'(b_{n+1},d_{n+1})).
			\end{align*} 
Thus,
\begin{align*}
K_{n+1}(a_{n+1},b_{n+1})&\simeq [K_{n+1}(c_{n+1},d_{n+1})\rightarrow K_{n+1}(F(a_{n+1})(c_{n+1}),F(b_{n+1})(d_{n+1}))] \\
&= [K_{n+1}(c_{n+1},d_{n+1})\rightarrow K_{n+1}(a_{n+1}\bullet c_{n+1},b_{n+1}\bullet d_{n+1})]. 
\end{align*}

	\end{proof}
\end{prop}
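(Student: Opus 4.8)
The plan is to reduce the whole statement to one structural input—that extensionality yields a genuine homotopy equivalence of Kan complexes $F'\colon K\times K\to K$ with $F'(a,c)=F(a)(c)=a\bullet c$, which is exactly the content of the earlier remark that $K\times K\simeq K$ (the case $n=0$ of $K_n\times K_n\simeq K_n$)—and then to propagate it up the tower of iterated mapping spaces by induction. For the base case I would use two elementary facts. First, the hom-space of a product splits on the nose, $(X\times Y)((x,y),(x',y'))=X(x,x')\times Y(y,y')$, since $(X\times Y)^{\Delta^1}=X^{\Delta^1}\times Y^{\Delta^1}$ and a fibre of a product is the product of fibres. Second, a homotopy equivalence of Kan complexes induces homotopy equivalences on all hom-spaces (a weak equivalence of $\infty$-groupoids is fully faithful, i.e.\ the path-space construction preserves equivalences). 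Applying the second fact to $F'$ and then the first gives
$$K(a,b)\times K(c,d)=(K\times K)((a,c),(b,d))\simeq K(F'(a,c),F'(b,d))=K(a\bullet c,b\bullet d),$$
and currying this product equivalence through the Cartesian-closed structure produces the exponential form $K(a,b)\simeq[K(c,d)\to K(a\bullet c,b\bullet d)]$.

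For the general statement I would induct on $n$, writing $K_n(p_n,q_n):=K(p_0,q_0)\cdots(p_n,q_n)$ and carrying the product form as induction hypothesis:
$$K_n(a_n,b_n)\times K_n(c_n,d_n)\simeq K_n(F'(a_n,c_n),F'(b_n,d_n)).$$
The right-hand side is literally the hom-space—at the vertices $a_{n+1},b_{n+1}$ and $c_{n+1},d_{n+1}$ respectively—of a Kan complex equivalent to $K_n(a_n,b_n)\times K_n(c_n,d_n)$; since the hypothesis is a genuine equivalence of Kan complexes it again preserves hom-spaces, and combining with the product-of-hom-spaces identity one level higher yields
$$K_{n+1}(a_{n+1},b_{n+1})\times K_{n+1}(c_{n+1},d_{n+1})\simeq K_{n+1}(F'(a_{n+1},c_{n+1}),F'(b_{n+1},d_{n+1})).$$
Currying once more gives the exponential form at level $n+1$. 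It is convenient here that each $\mathcal{K}_n=\langle K_n,F,G,\varepsilon,\eta\rangle$ is itself extensional, so that $\bullet$ and the comparison $F'$ are available at every level.

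I expect the delicate point to be the passage from the product form to the exponential form—the currying—rather than the inductive bookkeeping. An abstract equivalence $A\times B\simeq C$ does not by itself force $A\simeq[B\to C]$, so this step must genuinely invoke that the product equivalence is realised by the uncurried multiplication $F'$, whose transpose is the extensionality equivalence $F\colon K\to[K\to K]$; making this compatibility explicit, and verifying that currying preserves the equivalence at each level, is where I would concentrate the care. The secondary thing to watch is that the hom-space-of-a-product identity and the ``equivalences preserve hom-spaces'' lemma are applied only to honest homotopy equivalences of Kan complexes, so that the equivalence really does lift one dimension at each stage of the induction.
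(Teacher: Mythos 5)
Your proposal follows essentially the same route as the paper's own proof: the equivalence $F'\colon K\times K\to K$ supplied by extensionality, the splitting of hom-spaces of a product, preservation of hom-spaces under an equivalence, the identical product-form induction hypothesis, and a final currying step at each level. The currying passage you single out---from $K(a,b)\times K(c,d)\simeq K(a\bullet c,b\bullet d)$ to $K(a,b)\simeq[K(c,d)\rightarrow K(a\bullet c,b\bullet d)]$---is indeed the one non-routine point, and it is worth noting that the paper justifies it no further than you do: its proof simply writes ``that is'' and asserts the exponential form, so your explicit flagging of this step is, if anything, more careful than the original.
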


Therefore, the Proposition \ref{proposition5} allows the following definition.

\begin{defin}\label{paths-computation-interpretation-definition}
	Let $\mathcal{K}=\langle K, F,G,\varepsilon,\eta\rangle$ be an extensional Kan complex and $\rho$ be a valuation in $K$. For the $\beta\eta$-conversions $r,s,h(r)$ such that $\llbracket r \rrbracket_\rho\in K(c,d)$, $\llbracket s \rrbracket_\rho\in K(a,b)$ and $\llbracket h(r) \rrbracket_\rho\in K(a\bullet c,b\bullet d)$, define the interpretation by induction as follows
	\begin{enumerate}
		\item $\llbracket r \rrbracket_\rho\in K(c,d)$ is a concatenation of morphisms
         $$c\xrightarrow{f_1}c_1\xrightarrow{f_2}c_2\xrightarrow{f_3}\cdots\xrightarrow{f_m}d$$		
		 where each $f_i$ depends on: $(\varepsilon_g)_{a}:F(G(g))(a)\rightarrow g(a)$ (interprets each $\beta$-contraction of $r$) or $\eta_{b}:b\rightarrow G(F(b))$ (interprets each inverted $\eta$-contraction of $r$),  with $g\in [K\rightarrow K]$ and $a,b\in K$,
		\item $\llbracket sr \rrbracket_\rho=\llbracket s \rrbracket_\rho\bullet_{\Delta^1}\llbracket r \rrbracket_\rho=F(\llbracket s \rrbracket_\rho)(\llbracket r \rrbracket_\rho)\in K(a\bullet c,b\bullet d)$,
		\item $\llbracket \lambda^1 r.h(r) \rrbracket_\rho=G(\llbracket h(r) \rrbracket_{[-/ r]\rho})\in K(a,b)$ where  $\llbracket h(r) \rrbracket_{[-/ r]\rho}:K(c,d)\rightarrow K(a\bullet c,b\bullet d)$. 
		
		Take $n\geq 2$. For the $(\beta\eta)_n$-conversions (Definition \ref{Definition-Higher-equaity-Theory}) $r$, $s$ and $h(r)$ such that $\llbracket r \rrbracket_\rho\in K_{n-1}(c_{n-1},d_{n-1})$, $\llbracket s \rrbracket_\rho\in K_{n-1}(a_{n-1},b_{n-1})$  and $\llbracket h(r) \rrbracket_\rho\in K_{n-1}(a_{n-1}\bullet c_{n-1},b_{n-1}\bullet d_{n-1})$, define the interpretation
		\item $\llbracket r \rrbracket_\rho\in K_{n-1}(c_{n-1},d_{n-1})$ is a concatenation of $n$-simplexes 
		$$c_{n-1}\xrightarrow{f_1}s_1\xrightarrow{f_2}s_2\xrightarrow{f_3}\cdots\xrightarrow{f_m}d_{n-1}$$		
		 where each $f_i$ depends on: $(\varepsilon_g)_{e}:F(G(g))(e)\rightarrow g(e)$ (interprets each $\beta_n$-contraction of $r$) or $\eta_{e'}:e'\rightarrow G(F(e'))$ (interprets each inverted $\eta_n$-contraction of $r$),  with 
		
		$g:K_{n-1}(c_{n-1},d_{n-1})\rightarrow K_{n-1}(a_{n-1}\bullet c_{n-1},b_{n-1}\bullet d_{n-1})$, 
		$e\in K_{n-1}(c_{n-1},d_{n-1})$ and 	$e'\in K_{n-1}(a_{n-1},b_{n-1})$,
		\item $\llbracket sr \rrbracket_\rho=\llbracket s \rrbracket_\rho\bullet_{\Delta^n}\llbracket r \rrbracket_\rho=F(\llbracket s \rrbracket_\rho)(\llbracket r \rrbracket_\rho)\in K(a\bullet c,b\bullet d)$,
		\item $\llbracket \lambda^n r.h(r) \rrbracket_\rho=G(\llbracket h(r) \rrbracket_{[-/r]_\rho})\in K(a,b)$ where  
		
		$\llbracket h(r) \rrbracket_{[-/r]_\rho}:K_{n-1}(c_{n-1},d_{n-1})\rightarrow K_{n-1}(a_{n-1}\bullet c_{n-1},b_{n-1}\bullet d_{n-1})$.
		 
	\end{enumerate}
\end{defin}

Going back to the question: $\llbracket \beta_2\rrbracket_\rho\in K_2$? Since $\llbracket \lambda^1r.h(r) \rrbracket_\rho\in K_1$, so there are vertices $a,b\in K$ such that $\llbracket \lambda^1r.h(r) \rrbracket_\rho\in K(a,b)$. If $\llbracket r\rrbracket_\rho,\llbracket m\rrbracket_\rho\in K(c,d)$, by Definition \ref{paths-computation-interpretation-definition} (2), $\llbracket (\lambda^1r.h(r)) m\rrbracket_\rho,\llbracket h(m/r)\rrbracket_\rho\in K(a\bullet c,b\bullet d)$. Hence, 
$$\llbracket \beta_2\rrbracket_\rho\in K(a\bullet c,b\bullet d)(a_1,b_1)\subseteq K_2,$$
where $a_1=\llbracket (\lambda^1r.h(r)) m\rrbracket_\rho$ and $b_1=\llbracket h(m/r)\rrbracket_\rho$. 

\medskip For the question: $ \llbracket \eta_2\rrbracket_\rho\in K_2$? Let $e\in K(a,b)$ which does not depend on $r\in K(c,d)$. By Definition \ref{paths-computation-interpretation-definition} (2), $\llbracket er\rrbracket_\rho\in K(a\bullet c,b\bullet d)$. By Definition \ref{paths-computation-interpretation-definition} (3), 
$\llbracket \lambda^1r.er\rrbracket_\rho\in K(a,b)$. Then,
$$\llbracket \eta_2\rrbracket_\rho \in K(a,b)(a_1,b_1)\subseteq K_2,$$
where $a_1=\llbracket \lambda^1r.er\rrbracket_\rho$ and $b_1=\llbracket e\rrbracket_\rho$.

\medskip Therefore, the $(\beta\eta)_2$-conversions are $\lambda^2$-terms, which in turn define inductively other $\lambda^2$-terms by application and abstraction. We can continue iterating and have the following proposition, to prove that the Definition \ref{paths-computation-interpretation-definition} (4) is well defined for all $n\geq 2$.
	
\begin{prop}\label{proposition6}
Let $K$ be an extensional Kan complex  and $\rho:Var\rightarrow K$ be an evaluation. For each $n\geq 1$, $\llbracket \beta_n\rrbracket_\rho,\llbracket \eta_n\rrbracket_\rho\in K_n$.	
\end{prop}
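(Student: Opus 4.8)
The plan is to argue by induction on $n$, lifting the $\beta/\eta$ computation of Remark \ref{Remark-1} from $K$ itself to the iterated mapping spaces furnished by Proposition \ref{proposition5}. The key observation is that $\llbracket\beta_n\rrbracket_\rho$ and $\llbracket\eta_n\rrbracket_\rho$ are, by construction, the components of the natural equivalences $\varepsilon$ and $\eta$ witnessing extensionality, read off one dimension above $\beta_{n-1},\eta_{n-1}$; so the whole claim reduces to checking that these components are morphisms sitting at the correct simplicial level.

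For the base case I would simply quote the computations already recorded before the statement. For $n=1$ one has $\llbracket 1\beta\rrbracket_\rho=(\varepsilon_{\llbracket M\rrbracket_{[-/x]\rho}})_{\llbracket N\rrbracket_\rho}$ and $\llbracket 1\eta\rrbracket_\rho=\eta_{\llbracket M\rrbracket_\rho}$, both of which are $1$-simplices of $K$, hence elements of $K_1$. For $n=2$ the displayed verifications give $\llbracket\beta_2\rrbracket_\rho\in K(a\bullet c,b\bullet d)(a_1,b_1)\subseteq K_2$ and $\llbracket\eta_2\rrbracket_\rho\in K(a,b)(a_1,b_1)\subseteq K_2$, which is exactly the $n=1\to 2$ instance of the step below.

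For the inductive step, assume $\llbracket\beta_n\rrbracket_\rho,\llbracket\eta_n\rrbracket_\rho\in K_n$. Since these generators are morphisms of the Kan complex $K_{n-1}(c_{n-1},d_{n-1})$ and an arbitrary $(\beta\eta)_n$-conversion is a concatenation $\tau$ of such generators and their inverses, every such conversion $r$ has $\llbracket r\rrbracket_\rho\in K_{n-1}(c_{n-1},d_{n-1})\subseteq K_n$, so the abstraction and application clauses of Definition \ref{paths-computation-interpretation-definition} for $\lambda^n$-terms are well posed. Now Proposition \ref{proposition5}, applied to the relevant vertices, yields the equivalence
$$K_{n-1}(a_{n-1},b_{n-1})\simeq[K_{n-1}(c_{n-1},d_{n-1})\to K_{n-1}(a_{n-1}\bullet c_{n-1},b_{n-1}\bullet d_{n-1})],$$
realized by $F,G$ together with the restrictions of $\varepsilon,\eta$. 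This is precisely the reflexive--extensional situation of Definition \ref{interpretation-definition} and Remark \ref{Remark-1}, but with $K$ replaced by the iterated mapping space $K_{n-1}(\ldots)$. Hence, for a $\lambda^n$-term $\lambda^n r.h(r)$ with $\llbracket\lambda^n r.h(r)\rrbracket_\rho=G(\llbracket h(r)\rrbracket_{[-/r]\rho})$ and an argument $m$, the contraction $\beta_{n+1}$ is interpreted exactly as in Remark \ref{Remark-1} by
$$\llbracket\beta_{n+1}\rrbracket_\rho=(\varepsilon_{\llbracket h(r)\rrbracket_{[-/r]\rho}})_{\llbracket m\rrbracket_\rho}\colon F(G(\llbracket h(r)\rrbracket_{[-/r]\rho}))(\llbracket m\rrbracket_\rho)\longrightarrow\llbracket h(m/r)\rrbracket_\rho,$$
and $\eta_{n+1}$ by $\eta_{\llbracket e\rrbracket_\rho}\colon\llbracket e\rrbracket_\rho\to G(F(\llbracket e\rrbracket_\rho))=\llbracket\lambda^n r.er\rrbracket_\rho$. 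Each is a morphism of $K_{n-1}(a_{n-1}\bullet c_{n-1},b_{n-1}\bullet d_{n-1})$, respectively of $K_{n-1}(a_{n-1},b_{n-1})$, hence a vertex of the once-further iterated space $K_n(\ldots)$, i.e.\ an element of $K_{n+1}$. This closes the induction.

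I expect the main obstacle to be turning the bare equivalence of Proposition \ref{proposition5} into a genuine extensional structure on $K_{n-1}(\ldots)$ that is compatible with the one on $K$: one must check that the components $(\varepsilon_g)_e$ and $\eta_{e'}$ used to define $\beta_{n+1},\eta_{n+1}$ really are the counit and unit of that equivalence, and that the substitution identity $\llbracket h(r)\rrbracket_{[\llbracket m\rrbracket_\rho/r]\rho}=\llbracket h(m/r)\rrbracket_\rho$ holds at this level, just as $\llbracket M\rrbracket_{[g/x]\rho}$ behaves in Remark \ref{Remark-1}. A secondary, purely bookkeeping point is the dimension count: one should record once that a morphism of $K_m(\ldots)$ is a vertex of $K_{m+1}(\ldots)$, hence an $(m+2)$-simplex of $K$, so that ``a morphism at level $n-1$'' indeed lands in $K_{n+1}$. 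With this indexing fixed, the inductive step is the routine transcription of Remark \ref{Remark-1} sketched above.
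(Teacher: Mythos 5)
Your proposal follows essentially the same route as the paper's own proof: induction on $n$, with the base case quoted from the displayed computations for $\llbracket 1\beta\rrbracket_\rho$ and $\llbracket 1\eta\rrbracket_\rho$, and the inductive step combining Proposition~\ref{proposition5} with clauses (5)--(6) of Definition~\ref{paths-computation-interpretation-definition} to place $\llbracket\beta_{n+1}\rrbracket_\rho$ and $\llbracket\eta_{n+1}\rrbracket_\rho$ as morphisms of the iterated mapping spaces $K_{n-1}(\ldots)$, hence in $K_{n+1}$. Your explicit identification of these morphisms as the counit/unit components $(\varepsilon_{\llbracket h(r)\rrbracket_{[-/r]\rho}})_{\llbracket m\rrbracket_\rho}$ and $\eta_{\llbracket e\rrbracket_\rho}$, and your flagged compatibility concerns, simply make explicit what the paper leaves implicit (and records separately in Proposition~\ref{Proposition4}), so the argument is the same.
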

\begin{proof}
	If $n=1$, one has that 
	$\llbracket \beta_1\rrbracket_\rho=\llbracket 1\beta\rrbracket_\rho\in K_1$ and $\llbracket \eta_1\rrbracket_\rho=\llbracket 1\eta\rrbracket_\rho\in K_1$.
	Suppose that $\llbracket \beta_n\rrbracket_\rho,\llbracket \eta_n\rrbracket_\rho\in K_n$. So, induce the $\lambda^n$-terms: $r,m\in \Lambda_{n-1}(c_{n-1},d_{n-1})$ and $\lambda^n r.h(r)\in\Lambda_{n-1}(a_{n-1},b_{n-1})$. By Proposition \ref{proposition5} and Definition \ref{paths-computation-interpretation-definition} (5),
	
	$\llbracket (\lambda^n r.h(r))m \rrbracket_\rho,\llbracket h(m/r) \rrbracket_\rho\in K_{n-1}(a_{n-1}\bullet c_{n-1},b_{n-1}\bullet d_{n-1})$. Thus,
	$$\llbracket \beta_{n+1}\rrbracket_\rho\in K_{n-1}(a_{n-1}\bullet c_{n-1},b_{n-1}\bullet d_{n-1})(a_n,b_n)\subseteq K_{n+1},$$
	where $a_{n}=\llbracket (\lambda^nr.h(r)) m\rrbracket_\rho$ and $b_n=\llbracket h(m/r)\rrbracket_\rho$.  
	
	\medspace By I.H, let the $\lambda^n$-term: $\llbracket e\rrbracket_\rho\in K_{n-1}(a_{n-1},b_{n-1})$ which does not depend on $\llbracket r\rrbracket_\rho\in K_{n-1}(c_{n-1},d_{n-1})$. By Definition \ref{paths-computation-interpretation-definition} (5), $\llbracket er\rrbracket_\rho\in K_{n-1}(a_{n-1}\bullet c_{n-1},b_{n-1}\bullet d_{n-1})$. By Definition \ref{paths-computation-interpretation-definition} (6), 
	$\llbracket \lambda^nr.er\rrbracket_\rho\in K_{n-1}(a_{n-1},b_{n-1})$. So,
	$$\llbracket \eta_{n+1}\rrbracket_\rho \in K_{n-1}(a_{n-1},b_{n-1})(a_n,b_n)\subseteq K_{n+1},$$
	where $a_n=\llbracket \lambda^nr.er\rrbracket_\rho$ and $b_n=\llbracket e\rrbracket_\rho$.
\end{proof}

Of course, Definition \ref{paths-computation-interpretation-definition} depends on the syntax of higher lambda-terms. Next, we define a `Theory of higher $\lambda\beta\eta$-equality' as a type-free version of the computational paths of \cite{Queiroz}.

\begin{defin}[Theory of higher $\lambda\beta\eta$-equality]\label{Definition-Higher-equaity-Theory}
	A theory of higher $\lambda\beta\eta$-equality (TH-$\lambda\beta\eta$) consists of rules and axioms of the theory of $\beta\eta$-equality  ($\beta\eta$-conversions or in our case we write $(\beta\eta)_1$-conversions) between $\lambda$-terms, whose set we denote here by $\Lambda^0$, and the rules which define the higher $\beta\eta$-conversions  in the following sense:
	\begin{itemize}
		\item (1-introduction and 1-formation rules). $s$ is a $(\beta\eta)_1$-conversion from $\lambda$-term $a$ to $\lambda$-term $b$ (denoted by $a=_sb\in\Lambda^0$) if $s$ is a usual $(\beta\eta)$-conversion from $a$ to $b$, and we say that all $(\beta\eta)_1$-conversion  is a $\lambda^1$-term.
		
		\medskip Let $c=_md\in\Lambda^0$ and $[c=_rd\in\Lambda^0]$ $ac=_{h(r)}bd\in\Lambda_0$. Then $\lambda^{1}r.h(r)$ is a $\lambda^1$-term from $a$ to $b$, i.e., $\lambda^{1}r.h(r)\in\Lambda^0(a,b)$ and $(\lambda^{1}r.h(r))m$ is a $\lambda^1$-term from $ac$ to $bd$, i.e., $(\lambda^{1}r.h(r))m\in \Lambda^0(ac,bd)$. Let $\Lambda^1$ the set of the $\lambda^1$-terms.

		\item (Reduction rule). Let the $\lambda^{n+1}$-terms $m\in\Lambda^n(c,d)$, $[r\in\Lambda^n(c,d)]$ and $h(r)\in\Lambda^n(ac,ad)$. Define the $\lambda^{n+1}$-term: $\lambda^{n+1}r.h(r)\in\Lambda^{n}(a,b)$ and the  $\beta_{n+2}$-contraction
		$$(\lambda^{n+1}r.h(r))m\xrightarrow{\beta_{n+2}}h(m/r)\in\Lambda^n(ac,bd).$$
		\item  (Induction rule). If $t\in\Lambda^n(c,d)$ and  $e\in\Lambda^n(a,b)$, then  $\eta_{n+2}$-contraction is given by 
		$$\lambda^{n+1}t.et\xrightarrow{\eta_{n+2}}e\in\Lambda^n(a,b),$$
		where $e$ does not depend on $t$.
		\item ($(n+2)$-Introduction and $(n+2)$-formation rules). If $s$ is a $(\beta\eta)_{n+2}$-conversion (sequence, it can be empty, of $\beta_{n+2}$-contractions or reversed $\beta_{n+2}$-contractions or $\eta_{n+2}$-contractions or reversed $\eta_{n+2}$-contractions) from  $a$ to $b$ in $\Lambda^{n+1}$, that is  $a=_sb\in\Lambda^{n+1}$, then $s\in \Lambda^{n+1}(a,b)$. We say that $s$ is a $\lambda^{n+2}$-term if it is a $(\beta\eta)_{n+2}$-conversion. 
		
		\medskip Let $m\in\Lambda^{n+1}(c,d)$ and $[c=_rd\in\Lambda^{n+1}]$. Then one has the  $\lambda^{n+2}$-terms: $\lambda^{n+2}r.h(r)\in\Lambda^{n+1}(a,b)$ and
		$(\lambda^{n+2}r.h(r))m\in\Lambda^{n+1}(ac,bd).$ Let $\Lambda^{n+2}$ be the set of the $\lambda^{n+2}$-terms.
	\end{itemize}  
\end{defin}

\begin{prop}\label{Proposition4}
	Let $\mathcal{K}=\langle K,F,G,\varepsilon,\eta\rangle$ be an extensional Kan complex and  $\rho:Var\rightarrow K$ be an evaluation. The $(n+1)$-simplexes space $K_n(\llbracket p \rrbracket_\rho,\llbracket q \rrbracket_\rho)$  models the set of $\lambda^{n+1}$-terms $\Lambda^n(p,q)$.
\end{prop}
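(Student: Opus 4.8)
The plan is to read the statement as a soundness (adequacy) claim and prove it by induction on $n$: the interpretation $\llbracket\,\cdot\,\rrbracket_\rho$ of Definition \ref{paths-computation-interpretation-definition} restricts, for all $p,q\in\Lambda^n$, to a well-defined map $\Lambda^n(p,q)\to K_n(\llbracket p\rrbracket_\rho,\llbracket q\rrbracket_\rho)$ which sends each $\lambda^{n+1}$-term to a point of the iterated hom-space and commutes with the three syntactic operations of Definition \ref{Definition-Higher-equaity-Theory} (application, abstraction, and the $\beta/\eta$-reductions). The engine of the whole argument is the remark made just before Proposition \ref{proposition5}, that each $\mathcal{K}_m=\langle K_m,F,G,\varepsilon,\eta\rangle$ is again an extensional Kan complex; this lets me transport the level-$0$ computation of Remark \ref{Remark-1} verbatim one dimension higher at every step, so no genuinely new semantic input is needed beyond Propositions \ref{proposition5} and \ref{proposition6}.

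First I would treat the base case $n=0$. Here $\Lambda^0(p,q)$ is the set of $\lambda^1$-terms, i.e.\ the ordinary $(\beta\eta)_1$-conversions from $p$ to $q$, and $K_0(\llbracket p\rrbracket_\rho,\llbracket q\rrbracket_\rho)=K(\llbracket p\rrbracket_\rho,\llbracket q\rrbracket_\rho)$. By Definition \ref{paths-computation-interpretation-definition}(1) each such conversion is interpreted as a concatenation of edges of the form $(\varepsilon_g)_a$ and $\eta_b$, hence as a vertex of the path space $K(\llbracket p\rrbracket_\rho,\llbracket q\rrbracket_\rho)$; clauses (2) and (3) interpret application and abstraction of $\lambda^1$-terms via $F$ and $G$, and their soundness is exactly the content of the equivalence $K(a,b)\simeq[K(c,d)\rightarrow K(a\bullet c,b\bullet d)]$ furnished by Proposition \ref{proposition5} for $n=0$. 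For the inductive step I would assume $K_n(\llbracket p\rrbracket_\rho,\llbracket q\rrbracket_\rho)$ models $\Lambda^n(p,q)$ for all $p,q\in\Lambda^n$ and argue that $K_{n+1}(\llbracket p\rrbracket_\rho,\llbracket q\rrbracket_\rho)$ models $\Lambda^{n+1}(p,q)$. A member of $\Lambda^{n+1}(p,q)$ is a $(\beta\eta)_{n+2}$-conversion; by Definition \ref{paths-computation-interpretation-definition}(4) it is interpreted as a concatenation of the higher edges $(\varepsilon_g)_e$ and $\eta_{e'}$ living in $K_{n+1}(\llbracket p\rrbracket_\rho,\llbracket q\rrbracket_\rho)$, and Proposition \ref{proposition6} guarantees that the generating contractions $\beta_{n+2},\eta_{n+2}$ do land in this space. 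Clauses (5) and (6) interpret application and abstraction of $\lambda^{n+2}$-terms, whose soundness is precisely the level-$(n+1)$ instance of Proposition \ref{proposition5}, i.e.\ $K_{n+1}(a,b)\simeq[K_{n+1}(c,d)\rightarrow K_{n+1}(a\bullet c,b\bullet d)]$, together with extensionality of $\mathcal{K}_{n+1}$.

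The main obstacle is not the existence of the interpretations but the bookkeeping of their endpoints and the matching of syntactic reductions to the structural equivalences. Concretely, I must verify that the source and target of the interpreted $(\beta\eta)_{n+2}$-conversion are exactly $\llbracket p\rrbracket_\rho$ and $\llbracket q\rrbracket_\rho$, and that the reduction rule $(\lambda^{n+1}r.h(r))m\xrightarrow{\beta_{n+2}}h(m/r)$ and the induction rule $\lambda^{n+1}t.et\xrightarrow{\eta_{n+2}}e$ are modelled respectively by the components of $\varepsilon$ and by $\eta$ at level $n+1$. This is where the induction hypothesis is used to identify the intermediate vertices (the $a_i,b_i$ of Proposition \ref{proposition6}) with interpretations of $\lambda^{n+1}$-terms, and where the computation of Remark \ref{Remark-1}, applied inside $\mathcal{K}_{n+1}$, shows that $\llbracket(\lambda^{n+1}r.h(r))m\rrbracket_\rho\xrightarrow{(\varepsilon_{\ldots})}\llbracket h(m/r)\rrbracket_\rho$ and $\llbracket e\rrbracket_\rho\xrightarrow{\eta}\llbracket\lambda^{n+1}t.et\rrbracket_\rho$ hold. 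Once this endpoint-and-reduction bookkeeping is discharged, the inductive step closes and the proposition follows for every $n\geq 0$.
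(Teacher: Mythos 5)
Your proposal is correct and follows essentially the same route as the paper: a rule-by-rule verification (1-formation/introduction, reduction, induction, $(n+2)$-formation/introduction) that the clauses of Definition \ref{paths-computation-interpretation-definition} land conversions in $K_n(\llbracket p \rrbracket_\rho,\llbracket q \rrbracket_\rho)$, with $\beta$-contractions modelled by components of $\varepsilon$, $\eta$-contractions by the (homotopy-inverted) components of $\eta$, and application/abstraction justified by Proposition \ref{proposition5} and the extensionality of each $\mathcal{K}_n$. The only detail worth making explicit, which the paper does via $\tilde{\eta}_{\llbracket e \rrbracket_\rho}$, is that the $\eta_{n+2}$-contraction runs opposite to the natural equivalence $\eta:1\rightarrow GF$, so one must pass to its inverse up to homotopy.
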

\begin{proof}
\begin{itemize}
		\item (1-Formation and 1-introduction rules). Since $K$ is a Kan complex and $p,q\in\Lambda^0$, then $\llbracket p \rrbracket_\rho,\llbracket q \rrbracket_\rho\in K$ (vertices of $K$) and $K(p,q)$ is also a Kan complex.  
		
		Let $p=_sq\in\Lambda^0$ be a $(\beta\eta)_1$-conversion. Since $K$ is an extensional Kan complex, by Definition \ref{paths-computation-interpretation-definition} the interpretation
		$$\llbracket s \rrbracket_\rho:\llbracket p \rrbracket_\rho\xrightarrow{f_1}\llbracket p^1 \rrbracket_\rho\xrightarrow{f_2}\llbracket p^2 \rrbracket_\rho\xrightarrow{f_3}\cdots\xrightarrow{f_m}\llbracket q \rrbracket_\rho$$ 
		is a concatenation of morphisms in $K$ such that each $f_i$ corresponds to a morphism which depends on a map of the form: $(\varepsilon_g)_a:F(G(g))(a)\rightarrow g(a)$ (models the $\beta_1$-contraction)  or  $\eta_b:b\rightarrow G(F(b))$ (models the reversed $\eta_1$-contraction), where  $a,b\in K$ and $g\in[K\rightarrow K]$. Thus $\llbracket s\rrbracket_\rho\in K(\llbracket p \rrbracket_\rho,\llbracket q \rrbracket_\rho)$.
		
		\medskip Let $m\in\Lambda^0(s,t)$ and $[s=_rt\in\Lambda^0]$ $\lambda^{1}r.h(r)\in \Lambda^0(p,q)$. Since $\mathcal{K}$ is extensional,  by Definition \ref{paths-computation-interpretation-definition}
		$$\llbracket(\lambda^{1}r.h(r))m\rrbracket_\rho=F(G(\llbracket h(r) \rrbracket_{[-/ r]\rho}))(\llbracket m \rrbracket_\rho)\in K(\llbracket ps \rrbracket_\rho,\llbracket qt \rrbracket_\rho).$$
		
		\item (Reduction rule). Let $m\in\Lambda^n(s,t)$ and $[s=_rt\in\Lambda^n]$ $\lambda^{n+1}r.h(r)\in\Lambda^n(p,q)$. Since $K$ is extensional, the $\beta_{n+2}$-contraction
		$$(\lambda^{n+1}r.h(r))m\xrightarrow{\beta_{n+2}}h(m/r)\in\Lambda^n(ps,qt)$$
		corresponds to morphism in $K_n(\llbracket ps \rrbracket_\rho,\llbracket qt \rrbracket_\rho)$ ($(n+2)$-simplex at $K$): 
		 $$F(G(\llbracket h(r) \rrbracket_{[-/ r]\rho}))(\llbracket m \rrbracket_\rho)\xrightarrow{(\varepsilon_{\llbracket h(r) \rrbracket_{[-/ r]\rho}})_{\llbracket m \rrbracket_\rho}}\llbracket h(m/r) \rrbracket_\rho.$$
		\item (Induction rule). Let $r\in\Lambda^n(p,q)$ and  $e\in\Lambda^n(p,q)$. Since $K$ is extensional, the $\eta_{n+2}$-contraction
		$$\lambda^{n+1} t.et\xrightarrow{\eta_{n+2}}e\in\Lambda^n(p,q)$$
		corresponds to morphism in $K_n(\llbracket p \rrbracket_\rho,\llbracket q \rrbracket_\rho)$:
		$$G(F(\llbracket e \rrbracket_\rho))\xrightarrow{\tilde{\eta}_{\llbracket e \rrbracket_\rho}}\llbracket e \rrbracket_\rho,$$
		where $\tilde{\eta}_{\llbracket e \rrbracket_\rho}$ is an inverse (up to homotopy) from $(n+2)$-simplex $\eta_{\llbracket e \rrbracket_\rho}$ in $K$. 
		\item ($(n+2)$-Introduction and $(n+2)$-Formation rules). Take the $(\beta\eta)_{n+2}$-conversion $s=_rt\in\Lambda^{n+1}$. Since $K$ is an extensional Kan complex, by Definition \ref{paths-computation-interpretation-definition} the interpretation
		$$\llbracket r \rrbracket_\rho:\llbracket s \rrbracket_\rho\xrightarrow{f_1}\llbracket s^1 \rrbracket_\rho\xrightarrow{f_2}\llbracket s^2 \rrbracket_\rho\xrightarrow{f_3}\cdots\xrightarrow{f_m}\llbracket t \rrbracket_\rho$$ 
		is a concatenation of morphisms in $K_{n+1}$ such that each $f_i$ corresponds to a morphism which depends on a map of the form: $(\varepsilon_g)_e:F(G(g))(e)\rightarrow g(e)$ (models the $\beta_{n+2}$-contraction)  or  $\eta_{e'}:e'\rightarrow G(F(e'))$ (models the reversed $\eta_{n+2}$-contraction), where  $e\in K_{n+1}(c_n,d_n)$, $e'\in K_{n+1}(a_n,b_n)$ and $g:K_{n+1}(c_n,d_n)\rightarrow K_{n+1}(a_n\bullet c_n,b_n\bullet d_n)$. Thus $\llbracket r(s,t) \rrbracket_\rho\in K_{n+1}(\llbracket s \rrbracket_\rho,\llbracket t \rrbracket_\rho)$.
		
		\medskip Let $m\in\Lambda^{n+1}(s,t)$ and $[s=_rt:A]$ $\lambda^{n+2}r.h(r)\in\Lambda^{n+1}(p,q)$. Since $\mathcal{K}$ is extensional,  by Definition \ref{paths-computation-interpretation-definition}
		$$\llbracket(\lambda^{n+2}r.h(r))m\rrbracket_\rho=F(G(\llbracket h(r) \rrbracket_{[-/r ]_\rho}))(\llbracket m \rrbracket_\rho)\in K_{n+1}(\llbracket ps \rrbracket_\rho,\llbracket qt \rrbracket_\rho).$$
	\end{itemize}
\end{proof}

\begin{ejem}
Let $c=_md\in\Lambda^0$ and $[c=_rd\in\Lambda^0]$ $ac=_{h(r)}bd\in\Lambda^0$, thus $\lambda^1 r. h(r)\in\Lambda^0(a,b)$. The $\beta_2$-contraction is $2$-dimensional. It can be represented by the diagram
	\[\xymatrix{
	& {ac}\ar[d]_{(\lambda^1 r. h(r))m}\ar[r]^{1} \ar@{}[dr]|{\Longrightarrow_{\beta_{2}}}&
	ac \ar[d]^{h(m/r)}\\
	& bd\ar[r]_{1} & bd
}\] 
Since the interpretation of $\lambda^1 r. h(r)\in\Lambda^0(a,b)$ is given by 
$$\llbracket \lambda^1 r.h(r) \rrbracket_\rho=G(\llbracket h(r) \rrbracket_{[-/ r]\rho})\in K(\llbracket a\rrbracket_\rho,\llbracket b\rrbracket_\rho)$$
for every extensional Kan complex $\mathcal{K}$ and $\rho$, by Definition \ref{Definition-Homotopy-Type-Free-Theory} one has$ \lambda^1 r.h(r) \in \Lambda_0(a,b)$. And the interpretation of the application $\lambda^1 r. h(r))m$ is given by
$$\llbracket (\lambda^1 r.h(r))m \rrbracket_\rho=\llbracket \lambda^1 r.h(r) \rrbracket_\rho\bullet_{\Delta^1}\llbracket m \rrbracket_\rho=F(\llbracket \lambda^1 r.h(r) \rrbracket_\rho)(\llbracket m \rrbracket_\rho)\in K(\llbracket ac\rrbracket_\rho,\llbracket bd\rrbracket_\rho)$$
for all extensional Kan complex $\mathcal{K}$ and $\rho$. By Definition \ref{Definition-Homotopy-Type-Free-Theory} $(\lambda^1 r.h(r))m\in\Lambda_0(ac,bd)$.  Therefore $\Lambda^1=\Lambda_1$. 

Follow the question:  $\beta_2\in\Lambda_2$?  By Proposition \ref{Proposition4} (Reduction rule for $n=0$) the $\beta_2$-contraction is interpreted by the 2-simplex
	 $$F(G(\llbracket h(r) \rrbracket_{[-/ r]\rho}))(\llbracket m \rrbracket_\rho)\xrightarrow{(\varepsilon_{\llbracket h(r) \rrbracket_{[-/ r]\rho}})_{\llbracket m \rrbracket_\rho}}\llbracket h(m/r) \rrbracket_\rho\in K(\llbracket ac\rrbracket_\rho,\llbracket bd\rrbracket_\rho)$$
	 
\noindent for all extensional Kan complex $\mathcal{K}$ and evaluation $\rho$. By Definition \ref{Definition-Homotopy-Type-Free-Theory} one has $\beta_2\in\Lambda_0(ac,bd)(\lambda^1 r.h(r))m,h(m/r))$. Hence $\beta_2\in\Lambda_2$. 
\end{ejem}

One the other hand, by Proposition \ref{Proposition4} (Induction rule for $n=0$) and the same reasoning from previous example, it can be proved that $\eta_2\in\Lambda_2$, so $\Lambda^2\subseteq\Lambda_2$. Thus making use of Definitions \ref{interpretation-definition} and \ref{Definition-Homotopy-Type-Free-Theory} and  Proposition \ref{Proposition4} we can prove in the same way as the previous example, the following proposition.

\begin{prop}
	For each $n\geq 0$, $\Lambda^n\subseteq\Lambda_n$. Hence TH-$\lambda\beta\eta\subseteq HoTFT$.
\end{prop}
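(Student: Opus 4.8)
The plan is to prove $\Lambda^n\subseteq\Lambda_n$ by induction on $n$, the single substantive engine being Proposition \ref{Proposition4}; the induction hypothesis is used only to certify that the \emph{endpoints} of an $(n+1)$-conversion already live in the right hierarchy. For the base case $n=0$, both $\Lambda^0$ (Definition \ref{Definition-Higher-equaity-Theory}) and $\Lambda_0$ (Notation \ref{nota1}) are by definition the set of ordinary $\lambda$-terms, so $\Lambda^0=\Lambda_0$ and in particular $\Lambda^0\subseteq\Lambda_0$. The cases $n=1,2$ are precisely the computations carried out in the two Examples preceding the statement ($\Lambda^1=\Lambda_1$ and $\beta_2,\eta_2\in\Lambda_2$), which I would cite as the concrete pattern the induction abstracts.

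For the inductive step I would assume $\Lambda^n\subseteq\Lambda_n$ and take an arbitrary $\lambda^{n+1}$-term $s$. By the formation rules of Definition \ref{Definition-Higher-equaity-Theory} one has $s\in\Lambda^n(p,q)$ for some $\lambda^n$-terms $p,q$; by the induction hypothesis $p,q\in\Lambda_n$, so $p$ and $q$ are genuine $n$-conversions of the HoTFT-hierarchy and it is meaningful to ask whether $s$ is an $(n+1)$-conversion between them (this is exactly where the indexing of $\Lambda_{n+1}(\mathcal{K})$ in Notation \ref{nota1}, whose endpoints range over $\Lambda_n$, requires the hypothesis). Now fix an arbitrary extensional Kan complex $\mathcal{K}=\langle K,F,G,\varepsilon,\eta\rangle$ and an arbitrary valuation $\rho$. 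By Proposition \ref{Proposition4} the $(n+1)$-simplex space $K_n(\llbracket p \rrbracket_\rho,\llbracket q \rrbracket_\rho)$ models $\Lambda^n(p,q)$, so $\llbracket s \rrbracket_\rho\in K_n(\llbracket p \rrbracket_\rho,\llbracket q \rrbracket_\rho)$; this is precisely the statement that $s$ is an $(n+1)$-conversion induced by $\mathcal{K}$, i.e.\ $s\in\Lambda_{n+1}(\mathcal{K})$. Since $\mathcal{K}$ was arbitrary, $s\in\bigcap_{\mathcal{K}}\Lambda_{n+1}(\mathcal{K})=\Lambda_{n+1}$ by Definition \ref{Definition-Homotopy-Type-Free-Theory}, and as $s$ was an arbitrary $\lambda^{n+1}$-term this yields $\Lambda^{n+1}\subseteq\Lambda_{n+1}$, completing the induction.

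The second assertion then follows by reading the inclusions off the conversions: any equality $M=_sN$ derivable in TH-$\lambda\beta\eta$ is witnessed by some conversion $s\in\Lambda^n(M,N)$, and by the inclusion just proved $s\in\Lambda_n$, so the same equality is valid in every $Th(\mathcal{K})$ and hence in their intersection, giving TH-$\lambda\beta\eta\subseteq HoTFT$.

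The main obstacle I anticipate is not the induction itself but making the bridge between the two formalisms airtight. One must verify that ``$\llbracket s \rrbracket_\rho\in K_n(\llbracket p \rrbracket_\rho,\llbracket q \rrbracket_\rho)$ uniformly in $\mathcal{K}$ and $\rho$'' is exactly the condition defining membership in $\Lambda_{n+1}(\mathcal{K})$ from Notation \ref{nota1}, and that the three syntactic constructors of Definition \ref{Definition-Higher-equaity-Theory} — the $(\beta\eta)_{n+2}$-conversions, the application $(\lambda^{n+2}r.h(r))m$, and the abstraction $\lambda^{n+2}r.h(r)$ — are \emph{all} covered, which they are since Proposition \ref{Proposition4} treats each rule separately. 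I would also stress that only the inclusion, and not equality, can be expected in general: a converse $\Lambda_n\subseteq\Lambda^n$ would require every $n$-conversion forced by all extensional Kan complexes to be syntactically generated by the $\beta\eta$-rules, and the earlier Examples (where diagrams need not commute) indicate that no such syntactic completeness is available.
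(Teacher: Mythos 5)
Your proposal is correct and follows essentially the same route as the paper: the paper itself offers no separate proof, stating only that the proposition follows ``in the same way as the previous example'' from Proposition~\ref{Proposition4} together with Definitions~\ref{interpretation-definition} and~\ref{Definition-Homotopy-Type-Free-Theory}, which is exactly the induction you spell out (examples giving the cases $n=1,2$, Proposition~\ref{Proposition4} placing $\llbracket s \rrbracket_\rho$ in $K_n(\llbracket p \rrbracket_\rho,\llbracket q \rrbracket_\rho)$ uniformly in $\mathcal{K}$ and $\rho$, and the intersection in Definition~\ref{Definition-Homotopy-Type-Free-Theory} closing the argument). Your version is in fact more explicit than the paper's, particularly in isolating why the induction hypothesis is needed for the endpoints and in noting that only inclusion, not equality, can be expected.
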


\section{Conclusion}

We define the interpretation of the $\beta\eta$-contractions in an extensional Kan complex, whose $\infty$-groupoid structure induces higher  $\beta\eta$-contractions, which consolidate a type-free version of HoTT, which we call HoTFT (Homotopy Type-Free Theory), which could have the advantage of rescuing the $\beta\eta$-conversions as relations of intentional equality and not as relations of judgmental equality as is the case in HoTT.

\bigskip Besides, we define, from the identity types based on computational paths, the untyped theory of higher $\lambda\beta\eta$-equality  TH-$\lambda\beta\eta$, which is contained in HoTFT.

\bibliography{mybibfile}

\end{document}